\DeclareMathOperator{\diag}{diag}         % Define diag operator
\def\rbb{\mathbb{R}}
\def\trp{^T}
\def\diag{{\rm diag}}
\def\half{\frac{1}{2}}
\newtheorem{theorem}{Theorem}
\newtheorem{lemma}{Lemma}
\newtheorem{definition}{Definition}
\newtheorem{remark}{Remark}
\title{\LARGE \bf Time Averaged Consensus in a Direct Coupled Coherent Quantum Observer Network for a Single Qubit Finite Level Quantum System}
\author{Ian R.~Petersen %
\thanks{This work was supported by the
Australian Research Council (ARC) and the Air Force Office of Scientific
Research (AFOSR). This material is based on research sponsored by the
Air Force Research Laboratory, under agreement number FA2386-12-1-4075.  The U.S. Government is authorized to reproduce and
distribute reprints for Governmental purposes notwithstanding any
copyright notation thereon.
The views and conclusions contained herein are those of the authors
and should not be interpreted as necessarily representing the official
policies or endorsements, either expressed or implied, of the Air
Force Research Laboratory or the U.S. Government. }%
\thanks{Ian R. Petersen is with the School of  Engineering and Information Technology, 
        University of New South Wales at the Australian Defence Force Academy, Canberra ACT 2600, Australia.
         {\tt\small i.r.petersen@gmail.com} } 
}%
\def\begce{\begin{center}}
\def\endce{\end{center}}
\def\begar{\begin{array}}
\def\endar{\end{array}}
\def\begeq{\begin{equation}}
\def\endeq{\end{equation}}
\def\begdi{\begin{displaymath}}
\def\enddi{\end{displaymath}}
\def\begdis{\begin{eqnarray*}}
\def\enddis{\end{eqnarray*}}
\def\begeqa{\begin{eqnarray}}
\def\endeqa{\end{eqnarray}} 
\def\re{{\mathbb R}}
\def\C{{\mathbb C}} % complex numbers (AMS symbol)        
\begin{document}

\maketitle
\thispagestyle{empty}
\pagestyle{empty}

\begin{abstract}
This paper considers the problem of constructing a  direct coupled quantum observer network for a single qubit quantum system. The proposed  observer consists of a network of quantum harmonic oscillators and it is shown that the  observer network output converges to a consensus in a time averaged sense in which each component of the observer estimates a specified output of the quantum plant.  An example and simulations are included.
\end{abstract}

\section{Introduction} \label{sec:intro}
There has been significant interest in controlling  multi-agent systems to achieve a consensus; e.g., see \cite{LZHD14,SJ13}. Also, the problem of consensus in multi-agent estimation problems has been considered; e.g., see \cite{OS09}. In addition, consensus has been considered in  quantum multi-agent systems; see \cite{TMS14,SDPJ1a}. 
The papers \cite{PET14Aa,PET14Ba}  considered the problem of constructing a direct coupling quantum observer for a given quantum system. The problem of constructing an observer for  a linear quantum system has been considered for example in \cite{MJ12a}. The theory of linear quantum systems has been of considerable interest in recent years; e.g., see \cite{JNP1,ShP5}.
For such  system models, an important class of  control problems are  coherent
quantum feedback control problems; e.g., see \cite{JNP1,HM12}. In these  control problems, both the plant and the controller are quantum systems and the controller is designed to optimize some performance index. The coherent quantum observer problem can be regarded as a special case of the coherent
quantum feedback control problem in which the objective of the observer is to estimate the system variables of the quantum plant. The papers \cite{PET14Aa,PET14Ba} considered a direct coupling coherent observer problem in which the observer is directly coupled to the plant and not coupled via a field as in previous papers. This leads the papers \cite{PET14Aa,PET14Ba} to consider a notion of time-averaged convergence for the observers. 

We extend the results of \cite{PET14Ba} to consider a direct coupled  quantum observer for a single qubit quantum plant,  which is  a network of quantum harmonic oscillators.  This quantum network is constructed so that each output converges to the plant output of interest in a time averaged sense. This is a form of time averaged quantum consensus. 

\section{Quantum  Systems}
\noindent
{\bf Quantum Plant}

We first consider  the dynamics of a single qubit spin system which will correspond to the quantum plant; see also \cite{EMPUJ1a}.  
 The quantum mechanical behavior of the system is described in terms of the system \emph{observables} which are self-adjoint operators on the complex Hilbert space $\mathfrak{H}_p = \C^2$.   The commutator of two scalar operators $x$ and $y$ in ${\mathfrak{H}_p}$ is  defined as $[x, y] = xy - yx$.~Also, for a  vector of operators $x$ in ${\mathfrak H}_p$, the commutator of ${x}$ and a scalar operator $y$ in ${\mathfrak{H}_p}$ is the  vector of operators $[{x},y] = {x} y - y {x}$. 

The vector of system variables for the single qubit spin system under consideration is 
\begdi x_p=(x_1,x_2,x_3)^T\triangleq (\sigma_1,\sigma_2,\sigma_3),\enddi 
where $\sigma_1$, $\sigma_2$ and $\sigma_3$ are spin operators. Here, $x_p$ a  vector of self-adjoint operators, i.e., $x_p=x_p^\#$.~In particular $x_p(0)$ is represented by the Pauli matrices; i.e.,
\begin{eqnarray*}
\sigma_1(0)&=&\left(\begin{array}{cc}
         0 & 1 \\ 1 & 0
        \end{array} \right),\;\; 
\sigma_2(0)=\left(\begin{array}{cc}
         0 & -{\pmb i} \\ {\pmb i} & 0
        \end{array} \right),\\
\sigma_3(0)&=&\left(\begin{array}{cc}
         1 & 0 \\ 0 & -1
        \end{array} \right).
\end{eqnarray*}
The commutation relations for the spin operators are
\begeq \label{eq:Pauli_CCR}
[\sigma_i,\sigma_j] = 2{\pmb i} \sum_{k}\epsilon_{ijk}\sigma_k,
\endeq
where  $\epsilon_{ijk}$ denotes the Levi-Civita tensor. The dynamics of the system variables $x$ are determined by the system Hamiltonian which is a self-adjoint operator on $\mathfrak{H}_p$. The {Hamiltonian} is chosen to be linear in $x_p$; i.e., 
\begdi {\mathcal{H}}_p=r_p^T x_p(0) \;\;\enddi
where $r_p\in \re^3$. 
The plant model is then given by the differential equation
\begin{eqnarray}
\dot x_p(t) &=& -{\pmb i}[x_p(t),\mathcal{H}_p]; \nonumber \\
&=& A_px_p(t); \quad x_p(0)=x_{0p}; \nonumber \\
z_p(t) &=& C_px_p(t)
 \label{plant}
\end{eqnarray}
where $z_p$ denotes the  system variable to be estimated by the observer and  $C_p\in \rbb^{1\times 3}$; e.g., see \cite{EMPUJ1a}. Also, $A_p \in \re^{3\times 3}$. In order to obtain an expression for the matrix $A_p$ in terms of $r_p$, we  define the linear mapping
$\Theta_p: \C^3 \rightarrow \C^{3\times 3}$ as
\begeq \label{eq:Theta_definition}
\Theta_p(\beta)= \left(\begin{array}{ccc}
         0 & \beta_3 & -\beta_2 \\ -\beta_3 & 0 & \beta_1 \\ \beta_2  & -\beta_1  & 0
        \end{array} \right).
\endeq  
Then, it was shown in \cite{EMPUJ1a} that 
\begin{equation}
-{\pmb i}[x_p(t),r_p^T x_p(t)] = - 2  \Theta_p(r_p) x_p(t)
\label{Ap}
\end{equation}
and hence $A_p = - 2  \Theta_p(r_p)$. 

In addition, it is shown in \cite{EMPUJ1a} that the mapping $\Theta_p(\cdot)$ has the following properties:
\begin{eqnarray}
\label{eq:Theta_1} \Theta_p(\beta)\gamma &=& - \Theta_p(\gamma) \beta,\\
 \label{eq:Theta_beta_beta} \Theta_p(\beta)\beta &=& 0,\\
 \label{eq:Theta_multiplication} \Theta_p(\beta)\Theta_p(\gamma) &=& \gamma \beta^T -\beta^T \gamma I,\\
 \label{eq:Theta_composition} \Theta\left(\Theta_p(\beta)\gamma\right)&=&\Theta_p(\beta)\Theta_p(\gamma) - \Theta_p(\gamma)\Theta_p(\beta).
\end{eqnarray}

\noindent
{\bf Quantum Observer Network}
The quantum observer network will be a linear quantum system of the form 
\begin{eqnarray}
\dot x(t) &=& Ax(t); \quad x(0)=x_0
 \label{quantum_system}
\end{eqnarray}
where $A$ is a real matrix in $\rbb^{n
\times n}$, and $ x(t) = [\begin{array}{ccc} x_1(t) & \ldots &
x_n(t)
\end{array}]\trp$ is a vector of system observables which are self-adjoint operators on an infinite dimensional Hilbert space $\mathfrak{H}$; e.g., see \cite{JNP1}. Here $n$ is assumed to be an even number and $\frac{n}{2}$ is the number of modes in the quantum system. 

The initial system variables $x(0)=x_0$ 
are assumed to satisfy the {\em commutation relations}
\begin{equation}
[x_j(0), x_k(0) ] = 2 i \Theta_{jk}, \ \ j,k = 1, \ldots, n,
\label{x-ccr}
\end{equation}
where $\Theta_o$ is a real skew-symmetric matrix with components
$\Theta_{jk}$.  The matrix $\Theta_o$ is assumed to be  of the  form
\begin{equation}
\label{Theta}
\Theta_o=\diag(J,J,\ldots,J)
\end{equation}
 where $J$ denotes the real skew-symmetric $2\times 2$ matrix
$$
J= \left[ \begin{array}{cc} 0 & 1 \\ -1 & 0
\end{array} \right].$$

The system dynamics (\ref{quantum_system}) are determined by the system Hamiltonian
which is a self-adjoint operator on the underlying  Hilbert space  $\mathfrak{H}$. For the linear quantum systems under consideration, the system Hamiltonian will be a
quadratic form
$\mathcal{H}=\half x(0)\trp R x(0)$, where $R$ is a real symmetric matrix. Then, the corresponding matrix $A$ in 
(\ref{quantum_system}) is given by 
\begin{equation}
A=2\Theta_o R \label{eq_coef_cond_A}.
\end{equation}
 where $\Theta_o$ is defined as in (\ref{Theta}).
e.g., see \cite{JNP1}. In this case, the system is said to be {\em physically realizable} and the commutation relations hold for all times greater than zero:
\begin{eqnarray}
\label{CCR}
[x_o(t),x_o(t)^T]&=&x_o(t)x_o(t)^T- \left(x_o(t)x_o(t)^T\right)^T\nonumber \\
&=& 2{\pmb i}\Theta_o \ \mbox{for all } t\geq 0.
\end{eqnarray}

\begin{remark}
\label{R1}
Note that that the Hamiltonian $\mathcal{H}$ is preserved in time for the system (\ref{quantum_system}). Indeed,
$ \mathcal{\dot H} = \frac{1}{2}\dot{x}^TRx+\frac{1}{2}x^TR\dot{x} = -x^TR\Theta_o R x + x^TR\Theta_o R x = 0$ since $R$ is symmetric and $\Theta_o$ is skew-symmetric.
\end{remark}

We now describe the linear quantum system of the form (\ref{quantum_system}) which will correspond to the  quantum observer network; see also \cite{JNP1,GJ09,ZJ11}. 
This system is described by a non-commutative differential equation of the form
\begin{eqnarray}
\dot x_o(t) &=& A_ox_o(t);\quad x_o(0)=x_{0o};\nonumber \\
z_o(t) &=& C_ox_o(t)
 \label{observer}
\end{eqnarray}
where the observer output $z_o(t)$ is the  observer network estimate vector and $ A_p \in \rbb^{n_o
\times n_o}$, $C_o\in \rbb^{\frac{n_o}{2} \times n_o}$.  Also,  $x_o(t)$  is the vector of self-adjoint 
non-commutative system variables; e.g., see \cite{JNP1}. We assume the  observer network order $n_o$  is an even number with $N=\frac{n_o}{2}$ being the number of elements in the  quantum observer network. We also assume that the plant variables commute with the observer variables. The system dynamics (\ref{observer}) are determined by the observer system Hamiltonian  
which is a self-adjoint operator on the underlying  Hilbert space for the observer. For the  quantum observer network under consideration, this Hamiltonian is given by a 
quadratic form:
$\mathcal{H}_o=\half x_o(0)\trp R_o x_o(0)$, where $R_o$ is a real symmetric matrix. Then, the corresponding matrix $A_o$ in 
(\ref{observer}) is given by 
\begin{equation}
A_o=2\Theta_o R_o \label{eq_coef_cond_Ao}
\end{equation}
 where $\Theta_o$ is defined as in (\ref{Theta}). Furthermore, we will assume that the  quantum observer network has a graph structure and is coupled to the quantum plant as illustrated in Figure \ref{F1}. 
\begin{figure}[htbp]
\begin{center}
\includegraphics[width=8cm]{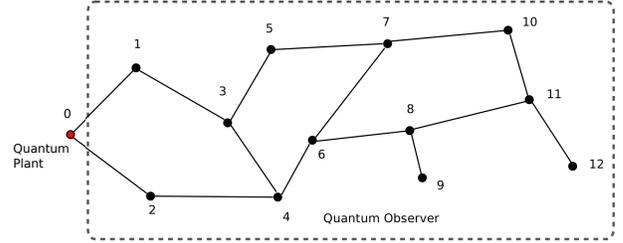}
\end{center}
\caption{The graph $(\mathcal{G},E)$ for a typical quantum observer network.}
\label{F1}
\end{figure}

The combined plant observer system is described by a connected graph $(\mathcal{G},E)$ which has $N+1$ nodes with node $0$ corresponding to the quantum plant and the remaining nodes, labelled $1,2,\ldots,N$, corresponding to the observer elements. 
This corresponds to an observer Hamiltonian of the form
\begin{eqnarray*}
\mathcal{H}_o&=&\half x_o(0)\trp R_o x_o(0) \nonumber \\
&=& \half\sum_{i=1}^{N}x_{oi}(0)\trp R_{oi} x_{oi}(0)\nonumber \\
&&+ \half\sum_{i=1}^{N}\sum_{j=1}^{N}x_{oi}(0)\trp R_{cij} x_{oj}(0)
\end{eqnarray*}
where the vector of observer system variables $x_o$ is partitioned according to each element of the  quantum observer network as follows
\[
x_o = \left[\begin{array}{l}x_{o1}\\x_{o2}\\\vdots\\x_{oN}\end{array}\right].
\]
We assume that the variables for each element of the  quantum observer network commute with the variables of all other elements of the  quantum observer network; i.e., 
\[
[x_{oi},x_{oj}^T]=0 ~~ \forall ~~i\neq j.
\]
Here, $x_{oi} = \left[\begin{array}{l}q_{oi}\\p_{oi}\end{array}\right]$ for $i=1,2,\ldots,N$ where
$q_{oi}$ is the  position operator for the $i$th observer element  and $p_{oi}$ is the  momentum operator for the $i$th observer element. 

In addition, we define a coupling Hamiltonian which defines the coupling between the quantum plant and the quantum observer network:
\[
\mathcal{H}_c = \sum_{i=1}^{N}x_{p}(0)\trp R_{c0i} x_{oi}(0).
\]
Furthermore, we write
\[
z_o = \left[\begin{array}{l}z_{o1}\\z_{o2}\\\vdots\\z_{oN}\end{array}\right]
\]
where 
\[
z_{oi} = C_{oi}x_{oi} \mbox{ for }i=1,2,\ldots,N.
\]
Then
\[
C_o=\left[\begin{array}{llll}C_{o1} & & &\\
                               & C_{o2} & 0 &\\
                               & 0 & \ddots & \\
                               &&& C_{oN}
\end{array}\right].
\]

Note that $R_{oi} \in \rbb^{2 \times 2}$, $R_{cij} \in \rbb^{2 \times 2}$, $C_{oi} \in \rbb^{1 \times 2}$, and each matrix $R_{oi}$ is symmetric for $i=1,2,\ldots,N$, $j=1,2,\ldots,N$. In addition, $R_{c0j} \in \rbb^{3 \times 2}$ for $j=1,2,\ldots,N$. Also, the matrices $R_{cij}$  for $i=0,1,\ldots,N$, $j=1,2,\ldots,N$ are such that $R_{cij} \neq 0$ if and only if $(i,j) \in E$, the set of edges for the graph $(\mathcal{G},E)$.

The augmented quantum linear system consisting of the quantum plant and the   quantum observer network is  described by the total Hamiltonian
\begin{eqnarray}
\mathcal{H}_a &=& \mathcal{H}_p+\mathcal{H}_c+\mathcal{H}_o\nonumber \\
&=& r_p^T x_p(0)+\half\sum_{i=1}^{N}x_{oi}(0)\trp R_{oi} x_{oi}(0)\nonumber \\
&&+\half\sum_{i=1}^{N}\sum_{j=1}^{N}x_{oi}(0)\trp R_{cij} x_{oj}(0)\nonumber \\
&&+\sum_{i=1}^{N}x_{p}(0)\trp R_{c0i} x_{oi}(0).
\label{total_hamiltonian}
\end{eqnarray}
Then, it follows that the augmented quantum  system is described by the equations
\begin{eqnarray}
\dot x_p(t) &=& -{\pmb i}[x_p(t),\mathcal{H}_a];~ x_p(0)=x_{0p};\nonumber \\
\dot x_o(t) &=& -{\pmb i}[x_o(t),\mathcal{H}_a];~ x_o(0)=x_{0o};\nonumber \\
z_p(t) &=& C_px_p(t);\nonumber \\
z_o(t) &=& C_ox_o(t);
\label{augmented_system}
\end{eqnarray}
e.g., see \cite{EMPUJ1a}.

We now formally define the notion of a direct coupled linear quantum observer network.

\begin{definition}
\label{D1}
The matrices $R_{oj}$, $R_{cij}$,  $C_{oj}$ for $i=0,1,\ldots,N$, $j=1,2,\ldots,N$ and the graph $(\mathcal{G},E)$ define a {\em  linear quantum observer network} achieving time-averaged consensus convergence for the single qubit quantum plant (\ref{plant}) if the corresponding augmented linear quantum system (\ref{augmented_system}) is such that
\begin{equation}
\label{average_convergence}
\lim_{T \rightarrow \infty} \frac{1}{T}\int_{0}^{T}(\left[\begin{array}{l}1\\1\\\vdots\\1\end{array}\right]z_p(t) 
- z_o(t))dt = 0.
\end{equation}
\end{definition}

\section{Constructing a Direct Coupling Coherent Quantum Observer Network}
We now describe the construction of a direct coupled linear quantum observer network.  In this section, we assume that  $A_p =0$ in (\ref{plant}). This corresponds to $r_p = 0$ in the plant Hamiltonian. It follows from (\ref{plant}) that the vector of plant system variables $x_p(t)$ will remain fixed if the plant is not coupled to the observer network. However, when the plant is coupled to the quantum observer network this will no longer be the case. We will show that if the quantum observer is suitably designed, the plant quantity to be estimated  $z_p(t)$ will remain fixed and the condition (\ref{average_convergence}) will be satisfied. 

We assume that the matrices $R_{cij}$, $R_{oi}$ for $i=0,1,\ldots,N$, $j=1,2,\ldots,N$  are of the form  
\begin{equation}
\label{RciRoi}
R_{cij} = \alpha_{ij}\beta_{ij}^T,~~R_{oi} = \omega_iI
\end{equation}
where $\alpha_{ij}  \in \rbb^{2}$, $\beta_{ij} \in \rbb^{2}$ and $\omega_i > 0$  for $i=1,2,\ldots,N$, $j=1,2,\ldots,N$. Also, we assume that 
\begin{equation}
\label{Rc0j}
R_{c0j} = \alpha_{0j}\beta_{0j}^T \mbox{ where }\alpha_{0j} =  \alpha_{0}= C_p^T\in \rbb^{3}
\end{equation}
 for $j=1,2,\ldots,N$ such that $(0,j) \in E$, the set of edges for the graph $(\mathcal{G},E)$. In addition, note that $\alpha_{ij} = 0$ and $\beta_{ij} = 0$ for $(i,j) \not\in E$. Furthermore, we assume 
\begin{equation}
\label{Coi}
C_{oi} = C_p= \alpha_0^T
\end{equation}
 for $i=1,2,\ldots,N$.

We will show that these assumptions imply that the quantity $z_p(t) = C_px_p(t)$ will be constant for the augmented quantum system (\ref{augmented_system}). Indeed, the total Hamiltonian (\ref{total_hamiltonian}) will be given by 
\begin{eqnarray*}
\mathcal{H}_a &=& \half\sum_{i=1}^{N}\omega_ix_{oi}(0)\trp  x_{oi}(0)\nonumber \\
&&+\half\sum_{i=1}^{N}\sum_{j=1}^{N}x_{oi}(0)\trp \alpha_{ij}\beta_{ij}^T x_{oj}(0)\nonumber \\
&&+\sum_{j=1}^{N}x_{p}(0)\trp\alpha_{0j}\beta_{0j}^T  x_{oj}(0).
\end{eqnarray*}

Now using a similar calculation as in (\ref{Ap}), we calculate 
\begin{eqnarray}
\label{xpt}
\dot x_p(t) &=& -{\pmb i}[x_p(t),\mathcal{H}_a]\nonumber \\
&=&-2\sum_{j=1}^{N}\Theta_p(\alpha_{0j})x_p(t)\beta_{0j}^Tx_o(t) \nonumber \\
&=& -2 \Theta_p(\alpha_{0})x_p(t)\sum_{(0,j) \in E} \beta_{0j}^Tx_o(t).
\end{eqnarray}
Hence, the quantity $z_p(t) = C_px_p(t)$  satisfies the differential equation
\begin{eqnarray}
\label{zop}
\dot z_p(t) &=& -2C_p\Theta_p(\alpha_{0})x_p(t)\sum_{(0,j) \in E} \beta_{0j}^Tx_o(t)\nonumber \\
 &=& -2\alpha^T_0\Theta_p(\alpha_{0})x_p(t)\sum_{(0,j) \in E} \beta_{0j}^Tx_o(t) \nonumber \\
&=& 0
\end{eqnarray}
using (\ref{eq:Theta_beta_beta}) and the fact that $\Theta_p(\alpha)$ is skew symmetric.  That is, the quantity $z_p(t)$ remains constant and is not affected by the coupling to the coherent quantum observer network:
\begin{equation}
\label{zp_const}
z_p(t) = z_p = z_p(0)~ \forall t \geq 0.
\end{equation}

Also to calculate $\dot x_o(t)$, we first observe that for any $i=0,1,\ldots,N$, $j=1,2,\ldots,N$.
\begin{eqnarray*}
\left[\beta_{ij}^Tx_{oj}(t),x_{oj}(t)\right] &=& \beta_{ij}^Tx_{oj}(t)x_{oj}(t)-x_{oj}(t)\beta_{ij}^Tx_{oj}(t) \nonumber  \\
&=& \left(\beta_{ij}^Tx_{oj}(t)x_{oj}(t)^T\right)^T\nonumber \\
&&-x_{oj}(t)x_{oj}(t)^T\beta_{ij}\nonumber\\
&=&\left(x_{oj}(t)x_{oj}(t)^T\right)^T\beta_{ij}\nonumber \\
&&-x_{oj}(t)x_{oj}(t)^T\beta_{ij}\nonumber\\
&=&-\left[x_{oj}(t),x_{oj}(t)^T\right]\beta_{ij}\nonumber\\
&=& -2 {\pmb i} J \beta_{ij} 
\end{eqnarray*}
using (\ref{CCR}). Hence, using this result and a similar approach to the derivation of (\ref{eq_coef_cond_A}) in \cite{JNP1}, we obtain 
\begin{eqnarray}
\label{xot}
\dot x_{oj}(t) &=& {\pmb i}[\mathcal{H}_a,x_{oj}(t)]\nonumber \\
&=&2\omega_jJx_{oj}(t)\nonumber \\
&&+\half{\pmb i}\sum_{i=1}^{N}\left(-2 {\pmb i} J \beta_{ij} \right)\alpha_{ij}^Tx_{oi}(t)\nonumber \\
&&+\half{\pmb i}\sum_{i=1}^{N}\left(-2 {\pmb i} J \alpha_{ji} \right)\beta_{ji}^Tx_{oi}(t)\nonumber \\
&&+{\pmb i}\alpha_{0j}\trp x_{p}(t)\left(-2 {\pmb i} J \beta_{0j} \right) \nonumber \\
&=&2\omega_jJx_{oj}(t)+J\sum_{i=1}^{N} \beta_{ij}\alpha_{ij}^Tx_{oi}(t)\nonumber \\
&&+J\sum_{i=1}^{N} \alpha_{ji}\beta_{ji}^Tx_{oi}(t)+2J\beta_{0j}z_p
\end{eqnarray}
for $j=1,2,\ldots,N$.

To construct a suitable  quantum observer network, we will further assume that 
\begin{equation}
\label{alphabeta}
\alpha_{ij} = \alpha_1, ~~\beta_{ij} = -\mu_{ij} \alpha_1
\end{equation}
for $i=1,\ldots,N$, $j=1,2,\ldots,N$ where $(i,j) \in E$. 
Here,  $\alpha_1 \in \rbb^2$ and
\begin{equation}
\label{muij}
\mu_{ij}=\mu_{ji} > 0. 
\end{equation}
Also, we will assume that 
\begin{equation}
\label{beta0}
\beta_{0j} = -\mu_{0j} \alpha_1
\end{equation}
for $j=1,2,\ldots,N$ where $(0,j) \in E$. 

In order to construct suitable values for the quantities $\mu_{ij}$ and $\omega_i$ so that (\ref{average_convergence}) is satisfied, we will require that 
\begin{eqnarray}
\label{xtilde}
&&2\omega_jJ\alpha_1-\sum_{(i,j)\in E,i> 0} \mu_{ij} J\alpha_1\alpha_1^T\alpha_1\nonumber \\
&&-\sum_{(i,j)\in E,i> 0}\mu_{ij}J\alpha_1 \alpha_1^T\alpha_1+2J\beta_{0j}\alpha_1^T\alpha_1 = 0
\end{eqnarray}
for $j=1,2,\ldots,N$. This condition is equivalent to 
\begin{eqnarray}
\label{mui1}
\omega_j&=&\sum_{(i,j)\in E,i> 0} \mu_{ij}\|\alpha_1\|^2 +\mu_{0j}\|\alpha_1\|^2
\end{eqnarray}
for $(0,j) \in E$ and 
\begin{eqnarray}
\label{mui2}
\omega_j=\sum_{(i,j)\in E,i> 0} \mu_{ij}\|\alpha_1\|^2
\end{eqnarray}
for  $(0,j) \not\in E$. 

Then, we define
\[
\tilde x_{oj}(t) = x_{oj}(t) - \frac{1}{\|\alpha_1\|^2}\alpha_1z_p
\]
for $j=1,2,\ldots,N$. It follows from (\ref{xtilde}) and (\ref{xot}) that
\begin{eqnarray*}
\dot{\tilde x}_{oj}(t)  &=& 2\omega_jJ\tilde x_{oj}(t)+J\sum_{i=1}^{N} \beta_{ij}\alpha_{ij}^T\tilde x_{oi}(t)\nonumber \\
&&+J\sum_{i=1}^{N} \alpha_{ji}\beta_{ji}^T\tilde x_{oi}(t)\nonumber \\
&=& 2\omega_jJ\tilde x_{oj}(t)-2\sum_{(i,j)\in E,i> 0} \mu_{ij} J\alpha_1\alpha_1^T\tilde x_{oi}(t)
\end{eqnarray*}
for $j=1,2,\ldots,N$.

We now write this equation as
\begin{equation}
\label{xtildedot}
\left[\begin{array}{l}
\dot{\tilde x}_{o1}(t)\\
\dot{\tilde x}_{o2}(t)\\
\vdots\\
\dot{\tilde x}_{oN}(t)
\end{array}\right] 
= A_o \left[\begin{array}{l}
\tilde x_{o1}(t)\\
\tilde x_{o2}(t)\\
\vdots\\
\tilde x_{oN}(t)
\end{array}\right] 
\end{equation}
where $A_o$ is an $N\times N$ block matrix with blocks
\[
a_{oij} = \left\{\begin{array}{ll}2\omega_iJ & \mbox{ for }i=j,\\
-2\mu_{ij} J\alpha_1\alpha_1^T & \mbox{ for }i\neq j \mbox{ and }(i,j)\in E,\\
0 & \mbox{ otherwise}
\end{array}\right.
\]
for $i=1,2,\ldots,N$, $j=1,2,\ldots,N$. Also, $A_o$ is as given in  (\ref{eq_coef_cond_Ao}) 
where $R_o$ is a symmetric $N\times N$ block matrix with blocks
\[
r_{oij} = \left\{\begin{array}{ll}\omega_iI & \mbox{ for }i=j,\\
-\mu_{ij} \alpha_1\alpha_1^T & \mbox{ for }i\neq j \mbox{ and }(i,j)\in E,\\
0 & \mbox{ otherwise}
\end{array}\right.
\]
for $i=1,2,\ldots,N$, $j=1,2,\ldots,N$.

To show that the above candidate  quantum observer network leads to the satisfaction of the condition (\ref{average_convergence}), we  note that 
\[
\tilde x_o =  \left[\begin{array}{l}
\tilde x_{o1}\\
\tilde x_{o2}\\
\vdots\\
\tilde x_{oN}
\end{array}\right] 
\]
 satisfies (\ref{xtildedot}). Hence, if we can show that 
\begin{equation}
\label{xtildeav}
\lim_{T \rightarrow \infty} \frac{1}{T}\int_{0}^{T}\tilde x_o(t)dt = 0
\end{equation}
 then it will follow from 
\begin{eqnarray}
\label{Coxtilde}
\lefteqn{C_o\frac{1}{\|\alpha_1\|^2}\left[\begin{array}{l}\alpha_1\\\alpha_1\\\vdots\\\alpha_1\end{array}\right]z_p}\nonumber \\
&=& \frac{1}{\|\alpha_1\|^2}\left[\begin{array}{llll}\alpha_1^T & & &\\
                               & \alpha_1^T & 0 &\\
                               & 0 & \ddots & \\
                               &&& \alpha_1^T
\end{array}\right]\left[\begin{array}{l}\alpha_1\\\alpha_1\\\vdots\\\alpha_1\end{array}\right]z_p \nonumber \\
&=& \left[\begin{array}{l}1\\1\\\vdots\\1\end{array}\right]z_p
\end{eqnarray}
 that (\ref{average_convergence}) is satisfied.

We now show that the symmetric matrix $R_o$ is positive-definite.

\begin{lemma}
\label{L1}
The matrix $R_o$ is positive definite.
\end{lemma}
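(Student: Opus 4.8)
The plan is to prove the claim directly from the defining quadratic form. Writing a generic vector as $v = [v_1\trp, v_2\trp, \ldots, v_N\trp]\trp$ with each $v_i \in \rbb^2$, the block structure of $R_o$ together with the symmetry $\mu_{ij} = \mu_{ji}$ gives
\[
v\trp R_o v = \sum_{i=1}^{N}\omega_i\|v_i\|^2 - \sum_{\substack{i \neq j,\ (i,j)\in E}} \mu_{ij}(v_i\trp\alpha_1)(\alpha_1\trp v_j).
\]
The key observation is that the off-diagonal terms depend on each $v_i$ only through the scalar projection $s_i \defeq \alpha_1\trp v_i$. I would therefore split each $v_i$ into its component along $\alpha_1$ and an orthogonal remainder $w_i$, so that $\|v_i\|^2 = s_i^2/\|\alpha_1\|^2 + \|w_i\|^2$.

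Next I would substitute the design constraints (\ref{mui1}) and (\ref{mui2}), which can be written uniformly as $\omega_j = \|\alpha_1\|^2\sum_{(i,j)\in E,\, i\geq 0}\mu_{ij}$ (the $i=0$ term appearing exactly when $(0,j)\in E$), into the diagonal term. The orthogonal parts contribute $\sum_i \omega_i\|w_i\|^2 \geq 0$ since $\omega_i>0$. For the scalar part, the diagonal weight at node $i$ separates into the plant contribution $\mu_{0i}$ and the observer-internal contributions $\mu_{ij}$; the latter combine with the cross terms to form a weighted graph-Laplacian quadratic form. After the edge-counting bookkeeping I expect the sum-of-squares identity
\[
v\trp R_o v = \sum_{i=1}^{N}\omega_i\|w_i\|^2 + \sum_{(0,i)\in E}\mu_{0i}\,s_i^2 + \sum_{\substack{(i,j)\in E,\ 0 < i < j}}\mu_{ij}(s_i - s_j)^2,
\]
in which every term is nonnegative because $\omega_i>0$ and $\mu_{ij}>0$.

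To upgrade this nonnegativity to strict positive-definiteness, I would set $v\trp R_o v = 0$ and force each group of squares to vanish: $w_i = 0$ makes every $v_i$ parallel to $\alpha_1$; $\mu_{0i}s_i^2 = 0$ forces $s_i = 0$ at every observer node adjacent to the plant node $0$; and $\mu_{ij}(s_i - s_j)^2 = 0$ forces $s_i = s_j$ across every observer-internal edge. Invoking the connectivity of the combined graph $(\mathcal{G},E)$, I would then propagate the value $s=0$ from the plant-adjacent nodes along paths to every observer node, concluding that all $s_i = 0$ and hence $v_i = s_i\alpha_1/\|\alpha_1\|^2 + w_i = 0$ for each $i$, i.e.\ $v=0$.

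The main obstacle I anticipate is the algebraic bookkeeping that exposes the Laplacian structure: in the block matrix each undirected observer edge $\{i,j\}$ contributes through both the $(i,j)$ and $(j,i)$ blocks, so the factors of two must be tracked carefully when collapsing the diagonal and cross terms into $(s_i - s_j)^2$. The orthogonal decomposition and the connectivity propagation are then routine, though it is worth emphasizing that positive-definiteness genuinely depends on $(\mathcal{G},E)$ being connected and on every observer node being reachable from the plant node.
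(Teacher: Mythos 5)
Your proof is correct, but it takes a genuinely different route from the paper's. The paper bounds the cross terms with the Cauchy--Schwarz inequality, $|x_{oi}^T\alpha_1\,x_{oj}^T\alpha_1| \leq \|x_{oi}\|\|x_{oj}\|\|\alpha_1\|^2$, thereby reducing the $2N\times 2N$ block form to a scalar $N\times N$ comparison matrix $\tilde R_o$ of norms; it then splits $\tilde R_o = \tilde R_{o1}+\tilde R_{o2}$ with $\tilde R_{o1}$ the weighted Laplacian of the observer-only graph $(\mathcal{\tilde G},\tilde E)$ and $\tilde R_{o2}$ the diagonal matrix of plant couplings $\tilde\mu_{0j}$, and invokes the standard characterization of a Laplacian's null space (spanned by indicator vectors of connected components) before finishing with the same connectivity argument you use. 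You instead exploit the rank-one coupling exactly: since each off-diagonal block $-\mu_{ij}\alpha_1\alpha_1^T$ sees only $s_i=\alpha_1^Tv_i$, the orthogonal split $v_i = (s_i/\|\alpha_1\|^2)\alpha_1 + w_i$ together with the substitution of (\ref{mui1})--(\ref{mui2}) turns the quadratic form into an identity rather than a lower bound, and your bookkeeping worry checks out: the block double sum counts each undirected observer edge twice with $\mu_{ij}=\mu_{ji}$ (giving $-2\mu_{ij}s_is_j$), while the design constraints contribute $\mu_{ij}s_i^2$ at node $i$ and $\mu_{ij}s_j^2$ at node $j$, so each edge yields exactly $\mu_{ij}(s_i-s_j)^2$. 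What each approach buys: the paper's Cauchy--Schwarz reduction modularizes the graph-theoretic content and leans on a known Laplacian fact, at the cost of an inequality and an auxiliary matrix; your sum-of-squares identity is lossless and self-contained (it derives the Laplacian positivity structure rather than citing it), and it makes transparent exactly what the kernel would be if some connected component of the observer subgraph had no edge to the plant node --- namely vectors constant along $\alpha_1$ on that component. Your final propagation step is the same combinatorial fact the paper uses: connectivity of $(\mathcal{G},E)$ forces every component of the observer subgraph to contain a plant-adjacent node, where $\mu_{0i}s_i^2=0$ pins $s_i=0$. (Both proofs also implicitly use $\alpha_1 \neq 0$, which is forced by $\omega_i > 0$ together with (\ref{mui1})--(\ref{mui2}).)
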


\begin{proof}
In order to establish this lemma, let 
\[
x_o =  \left[\begin{array}{l}x_{o1}\\x_{o2}\\\vdots\\x_{oN}\end{array}\right]
\]
be a non-zero real vector. Then
\begin{eqnarray}
\label{Roineq}
x_o^TR_ox_o &=& \sum_{i=1}^N \omega_i\|x_{oi}\|^2\nonumber \\
&&-\sum_{(i,j)\in E,i> 0,j>0}\mu_{ij}x_{oi}^T\alpha_1 x_{oj}^T\alpha_1\nonumber \\
&\geq& \sum_{i=1}^N \omega_i\|x_{oi}\|^2\nonumber \\
&&-\sum_{(i,j)\in E,i> 0,j>0}\mu_{ij}\|x_{oi}\|\|x_{oj}\|\|\alpha_1\|^2\nonumber \\
&=&\sum_{i=1}^N \omega_i\|x_{oi}\|^2\nonumber \\
&&-\sum_{(i,j)\in E,i> 0,j>0}\tilde \mu_{ij}\|x_{oi}\|\|x_{oj}\|\nonumber \\
\end{eqnarray}
using the Cauchy-Schwarz inequality. Here,
\[
\tilde \mu_{ij} = \mu_{ij}\|\alpha_1\|^2
\]
for $0=1,2,\ldots,N$, $j=1,2,\ldots,N$.
Thus, (\ref{Roineq}) implies
\[
x_o^TR_ox_o \geq \check x_o^T\tilde R_o\check x_o
\]
where
\[
\check x_o = \left[\begin{array}{c}\|x_{o1}\|\\\|x_{o2}\|\\\vdots\\\|x_{oN}\|\end{array}\right]
\]
and $\tilde R_o$ is a symmetric $N\times N$ matrix with elements defined by
\[
\tilde r_{oij} = \left\{\begin{array}{ll}\omega_i & \mbox{ for }i=j,\\
-\tilde \mu_{ij} & \mbox{ for }i\neq j \mbox{ and }(i,j)\in E,\\
0 & \mbox{ otherwise}
\end{array}\right.
\]
for $i=1,2,\ldots,N$, $j=1,2,\ldots,N$.

Now the vector $\check x_o$ will be non-zero if and only if the vector $x_o$ is non-zero. Hence, the matrix $R_o$ will be positive-definite if we can show that the matrix $\tilde R_o$ is positive-definite. In order to establish this fact, we first note that (\ref{mui1}) and (\ref{mui2})  imply that
\begin{eqnarray*}
\omega_j&=&\sum_{(i,j)\in E,i> 0} \tilde \mu_{ij} +\tilde \mu_{0j}
\end{eqnarray*}
for $(0,j) \in E$ and 
\begin{eqnarray*}
\omega_j=\sum_{(i,j)\in E,i> 0} \mu_{ij}\|\alpha_1\|^2
\end{eqnarray*}
for  $(0,j) \not\in E$. 
Hence, we can write
\begin{eqnarray*}
\tilde R_o 
&=& \tilde R_{o1} + \tilde R_{o2}
\end{eqnarray*}
where $\tilde R_{o1}$ is a symmetric $N\times N$ matrix with elements defined by
\[
\tilde r_{o1ij} = \left\{\begin{array}{ll}\sum_{(k,j)\in E,k> 0} \tilde \mu_{kj} & \mbox{ for }i=j,\\
-\tilde \mu_{ij} & \mbox{ for }i\neq j \mbox{ and }(i,j)\in E,\\
0 & \mbox{ otherwise}
\end{array}\right.
\]
for $i=1,2,\ldots,N$, $j=1,2,\ldots,N$. Also, $\tilde R_{o2}$ is a diagonal $N\times N$ matrix with elements defined by
\[
\tilde r_{o2ij} = \left\{\begin{array}{ll}\tilde \mu_{0j} & \mbox{ for }i=j \mbox{ and } (0,j) \in E,\\
0 & \mbox{ otherwise}
\end{array}\right.
\]
It follows that the matrix $\tilde R_{o2}$  is positive semidefinite. 

Now the matrix $\tilde R_{o1}$ is the Laplacian matrix for the weighted graph $(\mathcal{\tilde G}, \tilde E)$ obtained by removing node $0$ from the graph $(\mathcal{G}, E)$ along with the associated edges. Then each edge $(i,j) \in \tilde E$ is given a weight 
$\tilde \mu_{ij}$; e.g., see Figure \ref{F2} which shows the weighted graph $(\mathcal{\tilde G}, \tilde E)$ which would correspond to the graph $(\mathcal{G}, E)$ shown in Figure \ref{F1}. 
\begin{figure}[htbp]
\begin{center}
\includegraphics[width=8cm]{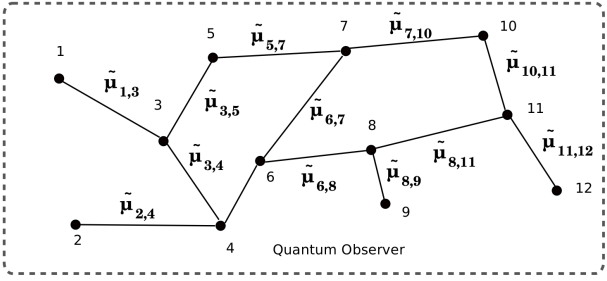}
\end{center}
\caption{The weighted graph $(\mathcal{\tilde G}, \tilde E)$ corresponding to the graph $(\mathcal{G}, E)$ in Figure \ref{F1}.}
\label{F2}
\end{figure}

It follows that the matrix $\tilde R_{o1}$ is positive-semidefinite with null space of the following form:
\[
\mathcal{N}(\tilde R_{o1}) = \mbox{span}\{f_1,f_2,\ldots,f_m\}
\]
where $m$ is the number of connected components of the graph $(\mathcal{\tilde G}, \tilde E)$. Also, each of the vectors $f_1,f_2,\ldots,f_m$ are vectors whose elements are either zeros or ones. For the vector $f_k$, the elements of this vector which are ones correspond to  the nodes in the graph $(\mathcal{\tilde G}, \tilde E)$ in the $k$th connected component. 

The fact that $\tilde R_{o1} \geq 0$ and $\tilde R_{o2} \geq 0$ implies that $\tilde R_{o} \geq 0$. In order to show that $\tilde R_{o} > 0$, suppose that $x$ is a non-zero vector in $\mathcal{N}(\tilde R_{o})$. It follows that 
\[
x^T\tilde R_{o}x = x^T\tilde R_{o1}x+x^T\tilde R_{o2}x = 0.
\]
Since $\tilde R_{o1} \geq 0$ and $\tilde R_{o2} \geq 0$, $x$ must be contained in the null space of $\tilde R_{o1}$ and the null space of $\tilde R_{o2}$. Therefore $x$ must be of the form
\[
x = \sum_{k=1}^m\gamma_kf_k
\]
where not all $\gamma_k = 0$. However, since the graph $(\mathcal{G}, E)$ is connected, it follows that there must be at least one branch $(0,j) \in E$ to a node in each of the connected components in the graph $(\mathcal{\tilde G}, \tilde E)$. Then
\[
x^T\tilde R_{o2}x = \sum_{(0,j) \in E}\tilde \mu_{0,j}\gamma_{k(j)}^2 = 0
\]
where $k(j)$ corresponds to the node of the connected component in $(\mathcal{\tilde G}, \tilde E)$ which the branch $(0,j)$ connects to. Since each $\tilde \mu_{0,j} > 0$, it follows that 
\[
\gamma_{k(j)} = 0
\]
for all $(0,j) \in E$. Furthermore, since each connected component in $(\mathcal{\tilde G}, \tilde E)$ has at least one branch $(0,j) \in E$ connected to it, it follows that $\gamma_1 = \gamma_2 \ldots = \gamma_m = 0$. However, this contradicts the assumption that not all $\gamma_k = 0$.  Thus, we can conclude that the matrix $\tilde R_{o}$ is positive definite and hence, the matrix  $R_{o}$ is positive definite. This completes the proof of the lemma. 
\end{proof}

We now verify that the condition (\ref{average_convergence}) is satisfied for the   quantum observer network under consideration. We recall from Remark \ref{R1} that the quantity $\half \tilde x_o(t)\trp R_o \tilde x_o(t)$
remains constant in time for the linear system:
\[
\dot{\tilde x}_o = A_o\tilde x_o= 2\Theta R_o \tilde x_o.
\]
That is 
\begin{equation}
\label{Roconst}
\half \tilde x_o(t) \trp R_o \tilde x_o(t) = \half \tilde x_o(0) \trp R_o \tilde x_o(0) \quad \forall t \geq 0.
\end{equation}
However, $\tilde x_o(t) = e^{2\Theta R_ot}\tilde x_o(0)$ and $R_o > 0$. Therefore, it follows from (\ref{Roconst}) that
\[
\sqrt{\lambda_{min}(R_o)}\|e^{2\Theta R_ot}\tilde x_o(0)\| \leq \sqrt{\lambda_{max}(R_o)}\|\tilde x_o(0)\|
\]
for all $\tilde x_o(0)$ and $t \geq 0$. Hence, 
\begin{equation}
\label{exp_bound}
\|e^{2\Theta R_ot}\| \leq \sqrt{\frac{\lambda_{max}(R_o)}{\lambda_{min}(R_o)}}
\end{equation}
for all $t \geq 0$.

Now since $\Theta $ and $R_o$ are non-singular,
\[
\int_0^Te^{2\Theta R_ot}dt = \half e^{2\Theta R_oT}R_o^{-1}\Theta ^{-1} - \half R_o^{-1}\Theta ^{-1}
\]
and therefore, it follows from (\ref{exp_bound}) that
\begin{eqnarray*}
\lefteqn{\frac{1}{T} \|\int_0^Te^{2\Theta R_ot}dt\|}\nonumber \\
 &=& \frac{1}{T} \|\frac{1}{2}e^{2\Theta R_oT}R_o^{-1}\Theta ^{-1} - \frac{1}{2}R_o^{-1}\Theta ^{-1}\|\nonumber \\
&\leq& \frac{1}{2T}\|e^{2\Theta R_oT}\|\|R_o^{-1}\Theta ^{-1}\| \nonumber \\
&&+ \frac{1}{2T}\|R_o^{-1}\Theta ^{-1}\|\nonumber \\
&\leq&\frac{1}{2T}\sqrt{\frac{\lambda_{max}(R_o)}{\lambda_{min}(R_o)}}\|R_o^{-1}\Theta ^{-1}\|\nonumber \\
&&+\frac{1}{2T}\|R_o^{-1}\Theta ^{-1}\|\nonumber \\
&\rightarrow & 0 
\end{eqnarray*}
as $T \rightarrow \infty$. Hence,  
\begin{eqnarray*}
\lefteqn{\lim_{T \rightarrow \infty} \frac{1}{T}\|\int_{0}^{T} \tilde x_o(t)dt\| }\nonumber \\
&=& \lim_{T \rightarrow \infty}\frac{1}{T}\|\int_{0}^{T} e^{2\Theta R_ot}\tilde x_o(0)dt\| \nonumber \\
&\leq& \lim_{T \rightarrow \infty}\frac{1}{T} \|\int_{0}^{T} e^{2\Theta R_ot}dt\|\|\tilde x_o(0)\|\nonumber \\
&=& 0.
\end{eqnarray*}
This implies
\[
\lim_{T \rightarrow \infty} \frac{1}{T}\int_{0}^{T} \tilde x_o(t)dt = 0
\]
and hence, it follows from (\ref{xtildedot}) and (\ref{Coxtilde}) that
\[
\lim_{T \rightarrow \infty} \frac{1}{T}\int_{0}^{T} z_o(t)dt = \left[\begin{array}{l}1\\1\\\vdots\\1\end{array}\right]z_p.
\]

Also, (\ref{zp_const}) implies 
\[
\lim_{T \rightarrow \infty} \frac{1}{T}\int_{0}^{T} z_p(t)dt = \left[\begin{array}{l}1\\1\\\vdots\\1\end{array}\right]z_p.
\]
Therefore, condition (\ref{average_convergence}) is satisfied. Thus, we have established the following theorem.

\begin{theorem}
\label{T1}
Consider a single qubit quantum plant of the form (\ref{plant}) where $r_p = 0$ and hence  $A_p = 0$. Then the matrices $R_{oi} $, $R_{cij}$, $C_{oi}$, $R_{oi}$ for $i=1,2,\ldots,N$, $j=1,2,\ldots,N$ and the connected graph $(\mathcal{G}, E)$
  will define a  direct coupled quantum observer network achieving time-averaged consensus convergence for this quantum plant if the conditions (\ref{RciRoi}), (\ref{Rc0j}), (\ref{Coi}), (\ref{alphabeta}), (\ref{beta0}), (\ref{muij}), (\ref{mui1}), (\ref{mui2}) are satisfied. 
\end{theorem}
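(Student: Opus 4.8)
The plan is to verify, for the augmented system (\ref{augmented_system}) built from the stated data, the single limit (\ref{average_convergence}) required by Definition \ref{D1}. I would split the integrand into its two pieces and handle them separately: first show that the plant output $z_p(t)$ is frozen by the coupling, and then show that the observer output $z_o(t)$ converges in time average to the replicated constant $[1;\ldots;1]z_p$. Since $z_p(t)$ will turn out to be genuinely constant, the first piece contributes the exact target value and the whole problem reduces to a time-averaging statement about the observer.

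For the first piece I would substitute the structural assumptions (\ref{RciRoi}), (\ref{Rc0j}), (\ref{Coi}) into the total Hamiltonian (\ref{total_hamiltonian}) with $r_p=0$ and compute $\dot x_p(t)=-{\pmb i}[x_p,\mathcal{H}_a]$ exactly as in (\ref{Ap}). Every coupling term feeds $x_p$ through the factor $\Theta_p(\alpha_0)x_p$, so that $\dot z_p=C_p\dot x_p=-2\alpha_0^T\Theta_p(\alpha_0)x_p(\cdots)$ vanishes because $\alpha_0^T\Theta_p(\alpha_0)=0$, which follows from the skew-symmetry of $\Theta_p(\alpha_0)$ together with property (\ref{eq:Theta_beta_beta}). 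This gives $z_p(t)\equiv z_p$ and disposes of the $z_p$ term in (\ref{average_convergence}).

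For the second piece I would compute the observer dynamics $\dot x_{oj}$ from the same Hamiltonian, using the canonical commutation relations (\ref{CCR}) to evaluate the commutators; this yields the affine system (\ref{xot}) driven by the constant term $2J\beta_{0j}z_p$. The key algebraic move is the shift $\tilde x_{oj}=x_{oj}-\|\alpha_1\|^{-2}\alpha_1 z_p$: the balance conditions (\ref{mui1})--(\ref{mui2}), equivalently (\ref{xtilde}) under (\ref{alphabeta}) and (\ref{beta0}), are engineered precisely so that this drift cancels, leaving the homogeneous system (\ref{xtildedot}), $\dot{\tilde x}_o=2\Theta_o R_o\tilde x_o$, with $R_o$ the symmetric block matrix displayed. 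Because $C_o$ maps the replicated vector $\|\alpha_1\|^{-2}[\alpha_1;\ldots;\alpha_1]z_p$ back to $[1;\ldots;1]z_p$ by (\ref{Coxtilde}), it then suffices to prove the time average (\ref{xtildeav}) of $\tilde x_o$ vanishes.

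The substantive work, and the step I expect to be the main obstacle, is establishing $R_o>0$ (Lemma \ref{L1}) and converting it into the time-average estimate. For positivity I would bound $x_o^TR_ox_o$ from below by a Cauchy-Schwarz argument that reduces it to $\check x_o^T\tilde R_o\check x_o$, then decompose $\tilde R_o=\tilde R_{o1}+\tilde R_{o2}$ into the weighted Laplacian of the reduced graph with node $0$ deleted plus a diagonal nonnegative term; here connectedness of $(\mathcal{G},E)$ is essential, since it forces every connected component of the reduced graph to carry a strictly positive diagonal entry from some edge $(0,j)\in E$, ruling out any nonzero common null vector and giving $\tilde R_o>0$. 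Given $R_o>0$, the conserved quadratic form of Remark \ref{R1} keeps $\tfrac{1}{2}\tilde x_o^TR_o\tilde x_o$ constant, whence the uniform bound $\|e^{2\Theta_o R_o t}\|\le\sqrt{\lambda_{\max}(R_o)/\lambda_{\min}(R_o)}$ as in (\ref{exp_bound}). Finally, since $\Theta_o R_o$ is nonsingular the integral $\int_0^T e^{2\Theta_o R_o t}\,dt$ has the closed form $\tfrac{1}{2}(e^{2\Theta_o R_o T}-I)R_o^{-1}\Theta_o^{-1}$, whose norm is bounded independently of $T$; dividing by $T$ drives the time average to zero, yielding (\ref{xtildeav}) and hence (\ref{average_convergence}).
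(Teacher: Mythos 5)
Your proposal is correct and follows essentially the same route as the paper's own proof: freezing $z_p(t)$ via $\alpha_0^T\Theta_p(\alpha_0)=0$, cancelling the drift with the shift $\tilde x_{oj}=x_{oj}-\|\alpha_1\|^{-2}\alpha_1 z_p$ under (\ref{mui1})--(\ref{mui2}), establishing $R_o>0$ exactly as in Lemma \ref{L1} (Cauchy--Schwarz reduction to the weighted Laplacian plus the diagonal term, with connectedness of $(\mathcal{G},E)$ killing the null space), and then deriving (\ref{xtildeav}) from the conserved quadratic form, the bound (\ref{exp_bound}), and the closed-form integral of $e^{2\Theta_o R_o t}$. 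There are no gaps, and your closed form $\tfrac{1}{2}\left(e^{2\Theta_o R_o T}-I\right)R_o^{-1}\Theta_o^{-1}$ agrees with the paper's expression.
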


\begin{remark}
\label{R2}
The  quantum observer network constructed above is determined by the choice of the positive parameters $\mu_{ij}$ for $i=0,1,\ldots,N$, $j=1,2,\ldots,N$. A number of possible choices for these parameters could be considered. One choice is to choose all of these parameters to be the same as $\tilde \mu_{ij} = \omega_0$ for $i=0,1,\ldots,N$, $j=1,2,\ldots,N$ where $\omega_0 > 0$ is a frequency parameter. 
Another possible approach is to choose the parameters $\mu_{ij}$ for $i=0,1,\ldots,N$, $j=1,2,\ldots,N$  randomly with a uniform distribution on a suitable frequency interval. 
\end{remark}
\section{Illustrative Example}
We now present some numerical simulations to illustrate the direct coupled  quantum observer network described in the previous section. We choose the quantum plant to have $A_p = 0$ and $C_p = [1~~0~~0]$. That is, the variable to be estimated by the quantum observer is the spin operator $\sigma_1$ of the quantum plant. For the  quantum observer network, we choose $N=5$ so that the  quantum observer network has five elements. Also, we suppose that the graph $(\mathcal{G}, E)$ defining the plant observer network is the complete graph corresponding to the five observer nodes and the plant node; i.e., every node is connected to every other node in this graph. This graph is illustrated in Figure \ref{F3}.  In addition, we choose $\alpha_1= [1~~0]^T$ and as discussed in Remark \ref{R2}, we choose the parameters $\tilde \mu_{ij}$  so that $\tilde \mu_{ij} = \omega_0$ for $i=0,1,\ldots,N$, $j=1,2,\ldots,N$ where $\omega_0 =1$. Then the dynamics of the corresponding  quantum observer network are defined by equations (\ref{zop}) and (\ref{xot}). 

\begin{figure}[htbp]
\begin{center}
\includegraphics[width=8cm]{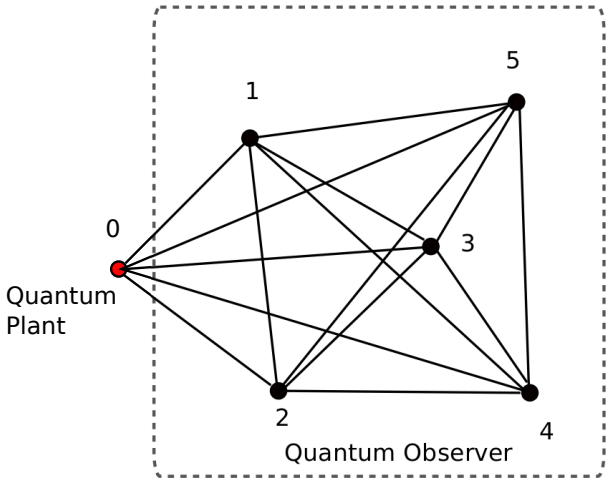}
\end{center}
\caption{The  plant observer network considered in the example.}
\label{F3}
\end{figure}

For this example, the augmented plant-observer system  can be described by the equations
\[
%\left[\begin{array}{l}\dot z_p(t)\\ \dot x_{o1}(t)\\\dot x_{o2}(t)\\\vdots\\\dot x_{o5}(t)\end{array}\right]
\dot x_a(t)= A_a x_a(t), \mbox{ where }
x_a(t) = \left[\begin{array}{l}  z_p(t)\\  x_{o1}(t)\\ x_{o2}(t)\\\vdots\\ x_{o5}(t)\end{array}\right]
\]
and
{\tiny
\[
A_a 
= \left[\begin{array}{lllllllllll}      
     0&     0&     0&     0&     0&     0&     0&     0&     0&     0&     0\\
     0&     0&    10&     0&     0&     0&     0&     0&     0&     0&     0\\
     2&   -10&     0&     2&     0&     2&     0&     2&     0&     2&     0\\
     0&     0&     0&     0&    10&     0&     0&     0&     0&     0&     0\\
     2&     2&     0&   -10&     0&     2&     0&     2&     0&     2&     0\\
     0&     0&     0&     0&     0&     0&    10&     0&     0&     0&     0\\
     2&     2&     0&     2&     0&   -10&     0&     2&     0&     2&     0\\
     0&     0&     0&     0&     0&     0&     0&     0&    10&     0&     0\\
     2&     2&     0&     2&     0&     2&     0&   -10&     0&     2&     0\\
     0&     0&     0&     0&     0&     0&     0&     0&     0&     0&    10\\
     2&     2&     0&     2&     0&     2&     0&     2&     0&   -10&     0 
\end{array}\right].
\]
}

Then, we can write
\[
x_a(t) = 
\Phi(t) x_a(0)
\]
where 
\[
\Phi(t)
= e^{A_a t}.
\]
Thus, the plant variable to be estimated $z_p(t)$ is given by
\begin{eqnarray*}
z_p(t) &=& e_1C_a\Phi(t) x_a(0) \nonumber \\
&=& \sum_{i=1}^{2N+2}e_1C_a\Phi_i(t) x_{ai}(0)
\end{eqnarray*}
where 
\[
C_a = \left[\begin{array}{ll}C_p & 0 \\0 & C_o\end{array}\right],
\]
$e_1$ is the first unit vector in the standard basis for $\rbb^{N+1}$, $\Phi_i(t)$ is the $i$th column of the matrix $\Phi(t)$ and
$x_{ai}(0)$ is the $i$th component of the vector $x_a(0)$.  We plot each of the quantities  
$e_1C_a\Phi_1(t),e_1C_a\Phi_2(t),\ldots,e_1C_a\Phi_{2N+1}(t) $ in Figure \ref{F5}(a). 
\begin{figure}%
\centering
\subfloat[][]{\includegraphics[width=3.9cm]{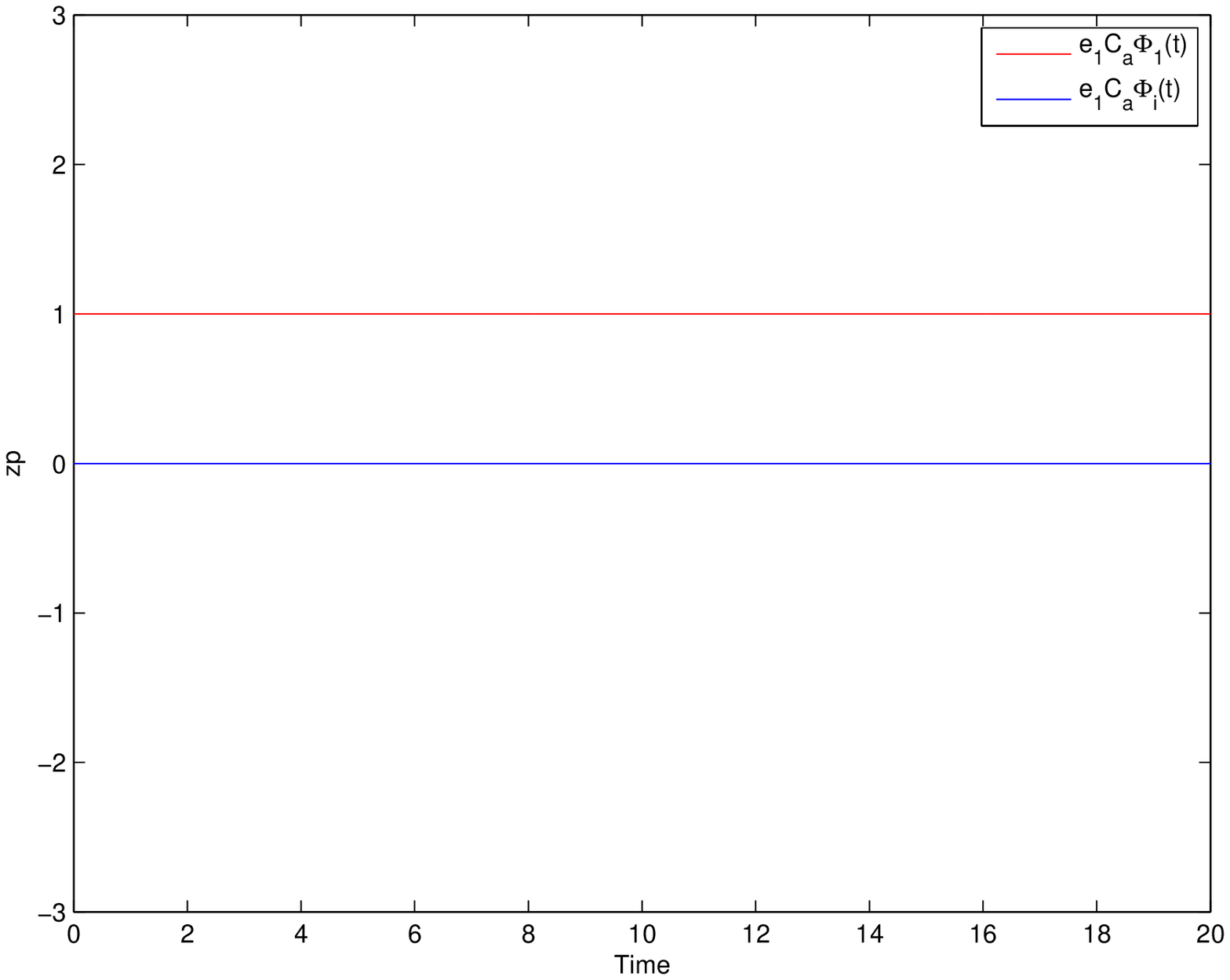}}%
\qquad
\subfloat[][]{\includegraphics[width=3.9cm]{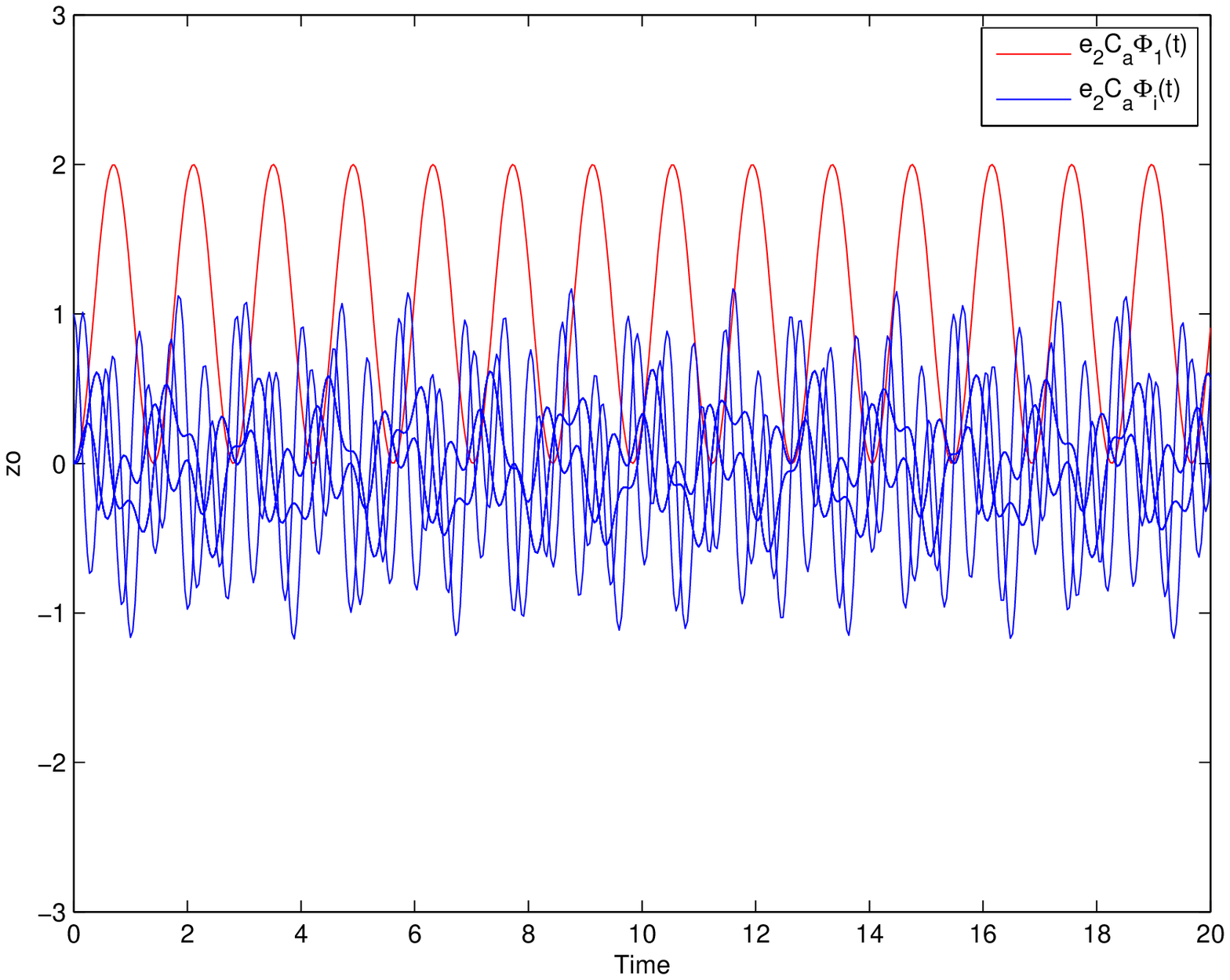}}\\
\subfloat[][]{\includegraphics[width=3.9cm]{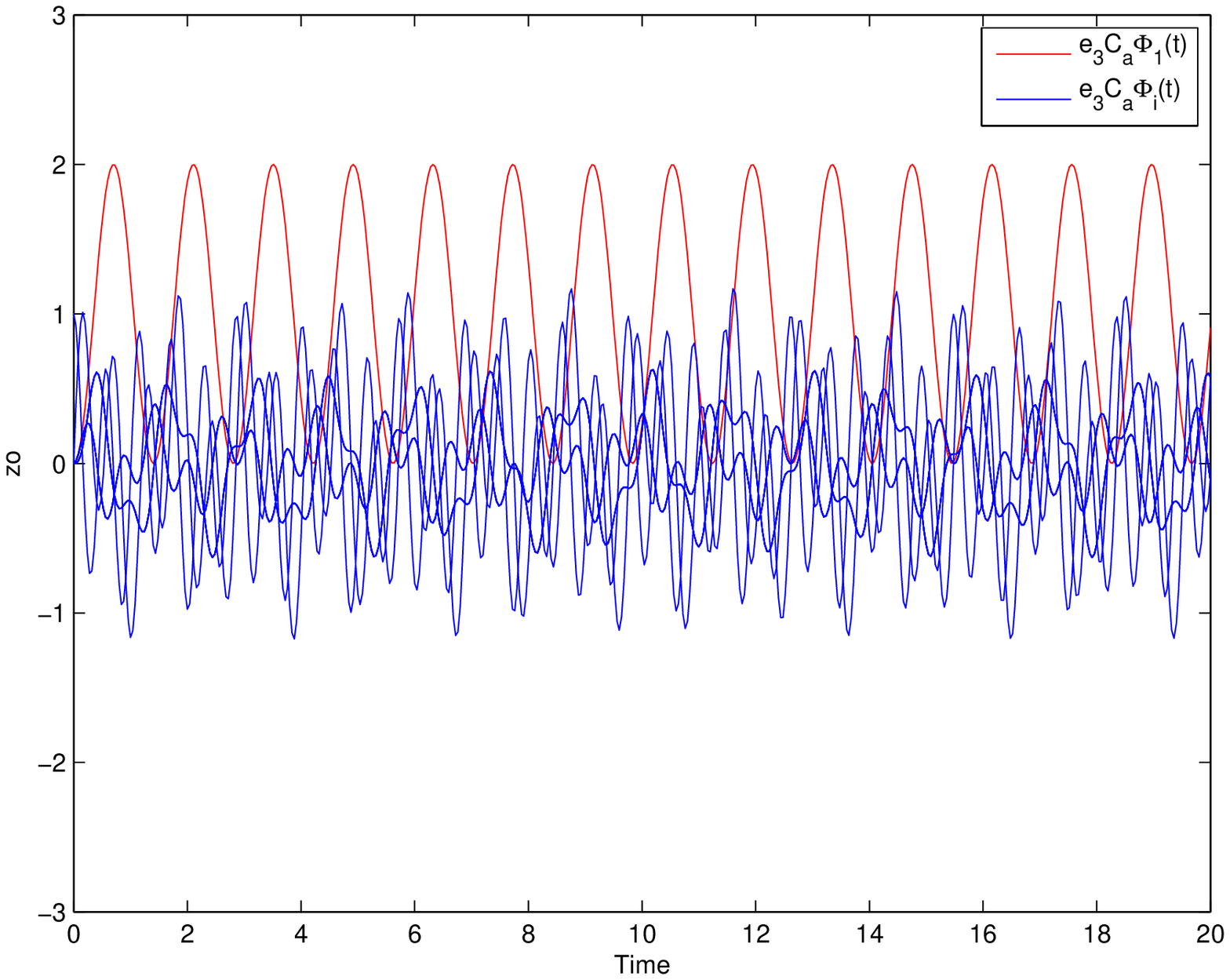}}%
\qquad
\subfloat[][]{\includegraphics[width=3.9cm]{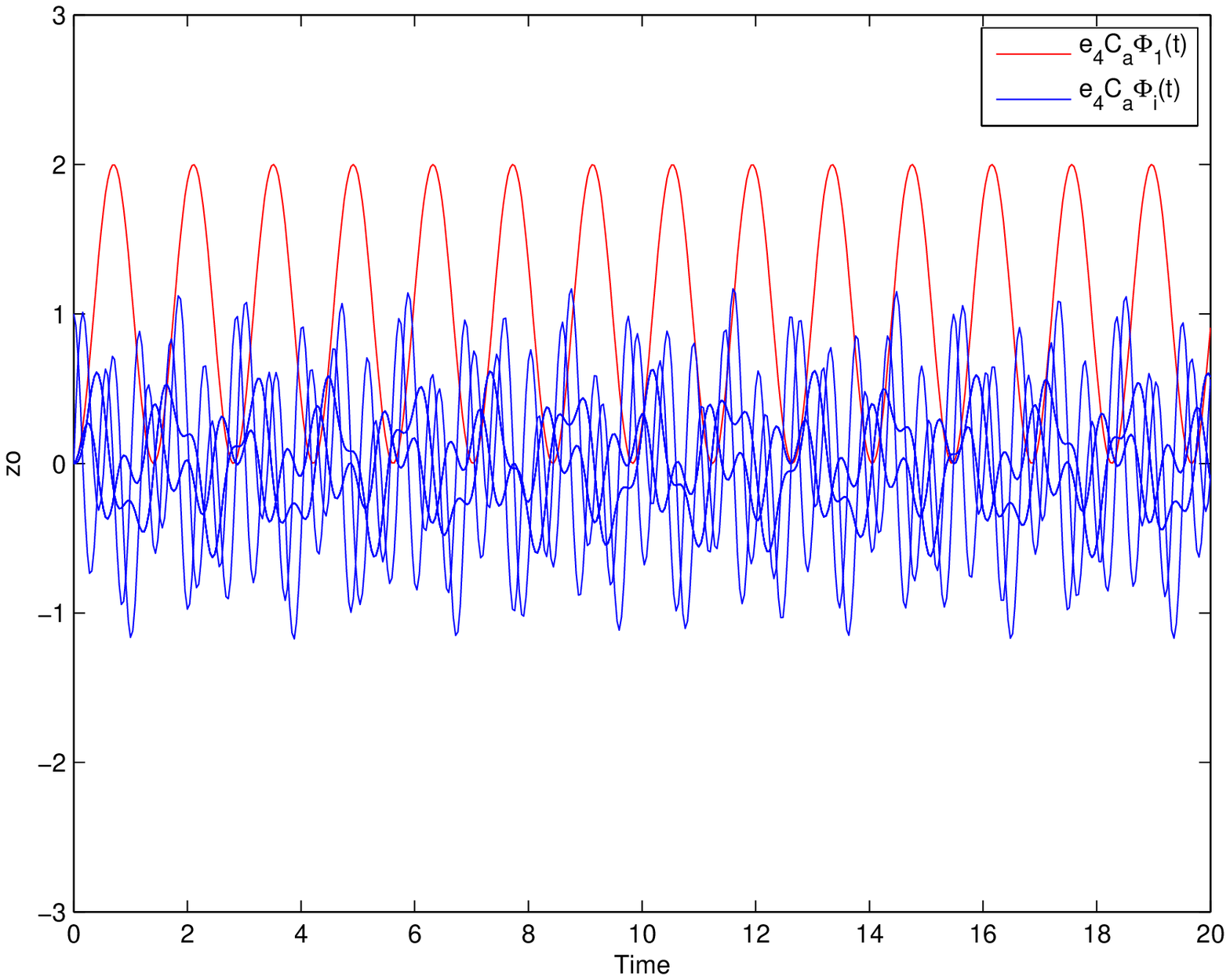}}\\
\subfloat[][]{\includegraphics[width=3.9cm]{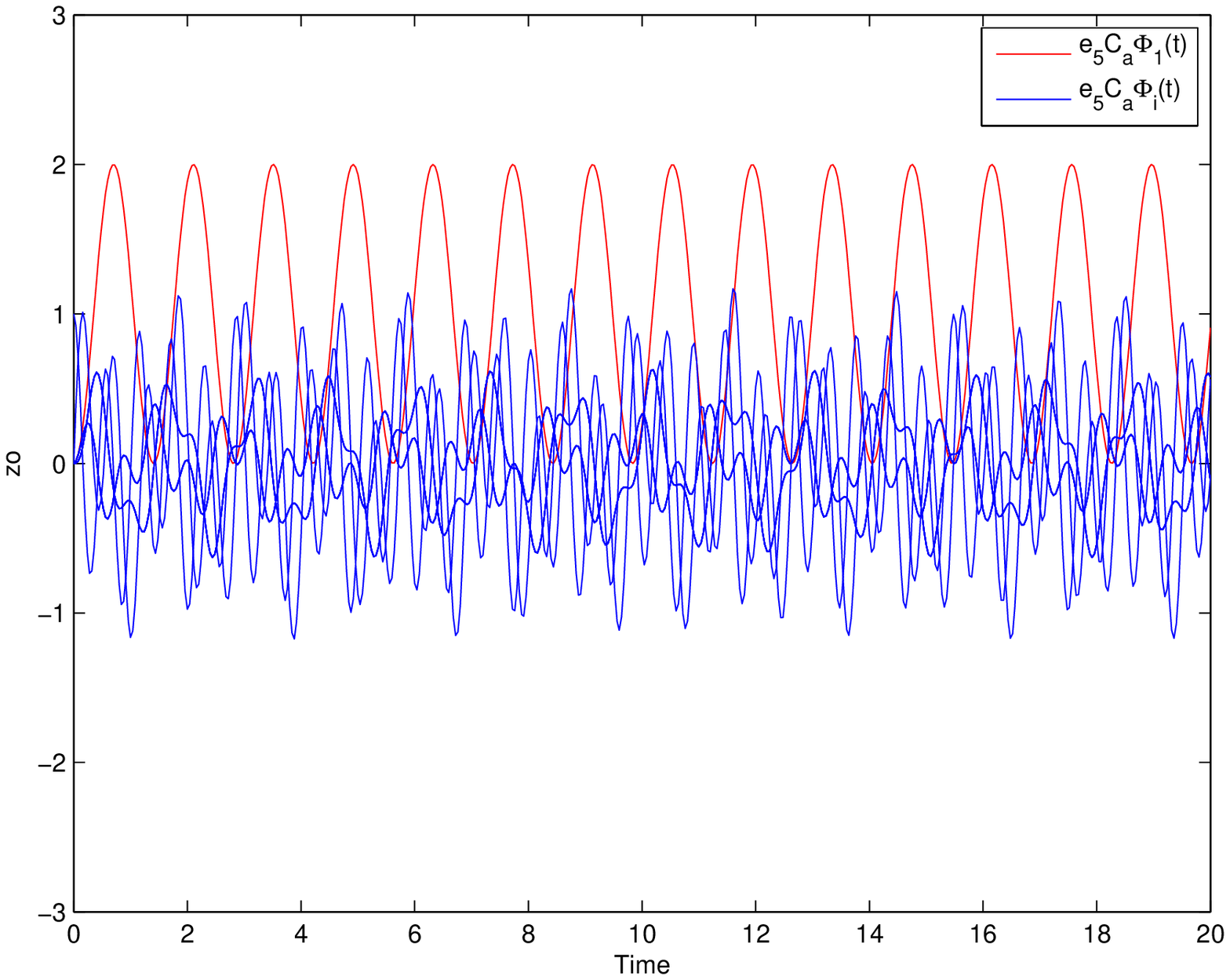}}%
\qquad
\subfloat[][]{\includegraphics[width=3.9cm]{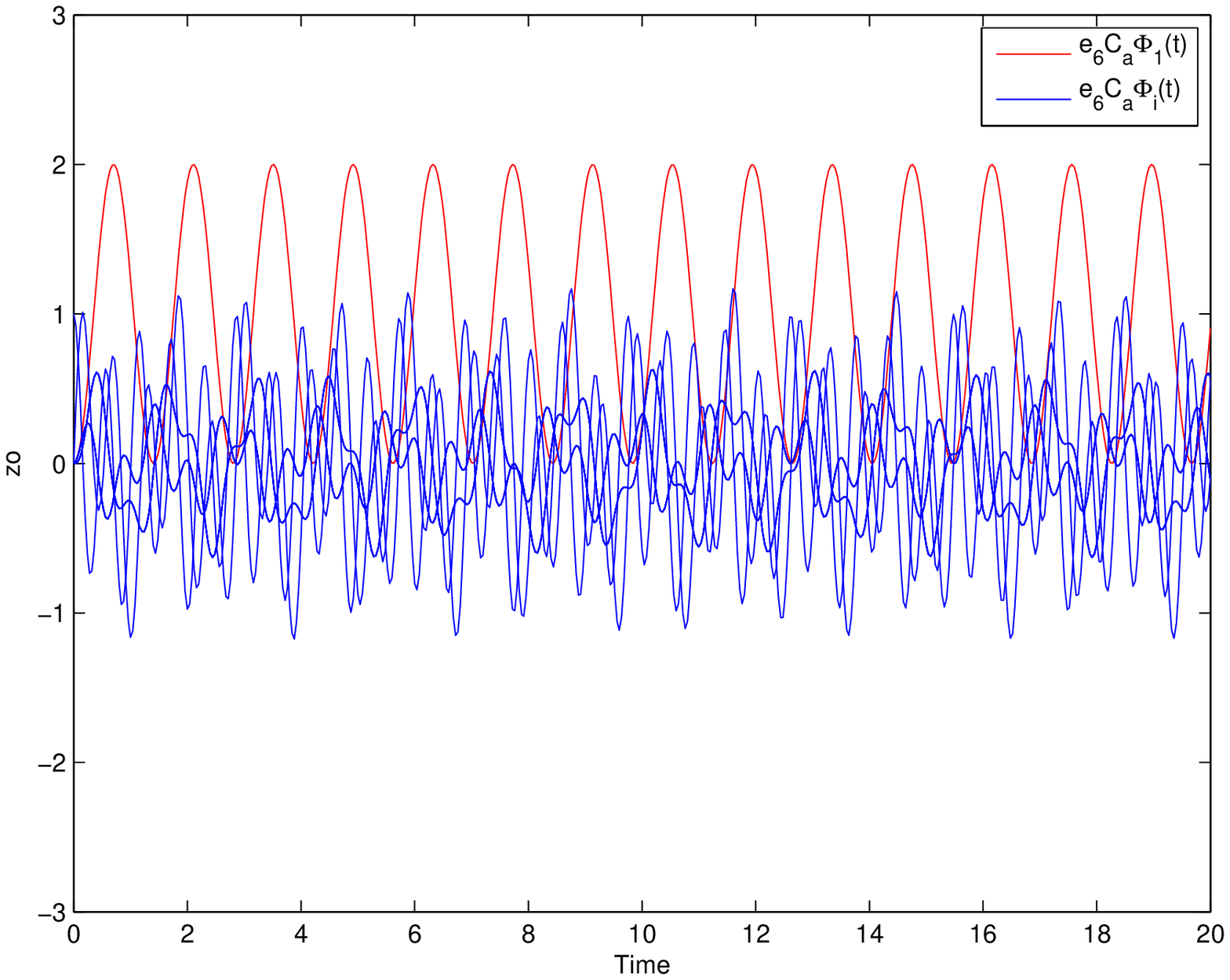}}\\
\caption{Coefficients defining (a) $z_p(t)$, (b) $z_{o1}(t)$, (c) $z_{o2}(t)$, (d) $z_{o3}(t)$, (e) $z_{o4}(t)$, and (f) $z_{o5}(t)$.}%
\label{F5}%
\end{figure}
%generated using F5.m

From this figure, we can see that $e_1C_a\Phi_1(t)\equiv 1$ and  $e_1C_a\Phi_{2}(t)\equiv 0$, $e_1C_a\Phi_{2}(t)\equiv 0$, $\ldots$, $e_1C_a\Phi_{2N+2}(t)\equiv 0$, and $z_p(t)$ will remain constant at $z_p(0)$ for all $t \geq 0$.

We now consider the output variables of the  quantum observer network $z_{oi}(t)$ for $i=1,2,\ldots,N$ which are given by
\[
z_{oi}(t) = \sum_{j=1}^{2N+1}e_{i+1}C_a\Phi_j(t) x_{aj}(0)
\]
where $e_{i+1}$ is the $(i+1)$th unit vector in the standard basis for $\rbb^{N+1}$. We plot each of the quantities  
$e_{i+1}C_a\Phi_1(t),e_{i+1}C_a\Phi_2(t),\ldots,e_{i+1}C_a\Phi_{2N+2}(t)$ in Figures  \ref{F5}(b) - \ref{F5}(f).

To illustrate the time average convergence property of the quantum observer (\ref{average_convergence}), we now plot the quantities
$\frac{1}{T}\int_0^Te_{i+1}C_a\Phi_1(t)dt$, $\frac{1}{T}\int_0^Te_{i+1}C_a\Phi_2(t)dt$, $\ldots$, $\frac{1}{T}\int_0^Te_{i+1}C_a\Phi_{2N+2}(t)dt$
for $i=1,2,\ldots,N$ in Figures \ref{F6}(a)-\ref{F6}(e). These quantities determine the averaged value of the $i$th observer output
\[
z_{oi}^{ave}(T) = \frac{1}{T}\int_0^T\sum_{j=1}^{2N+1}e_{i+1}C_a\Phi_j(t) x_{aj}(0)dt
\]
for $i=1,2,\ldots,N$. 
\begin{figure}%
\centering
\subfloat[][]{\includegraphics[width=3.9cm]{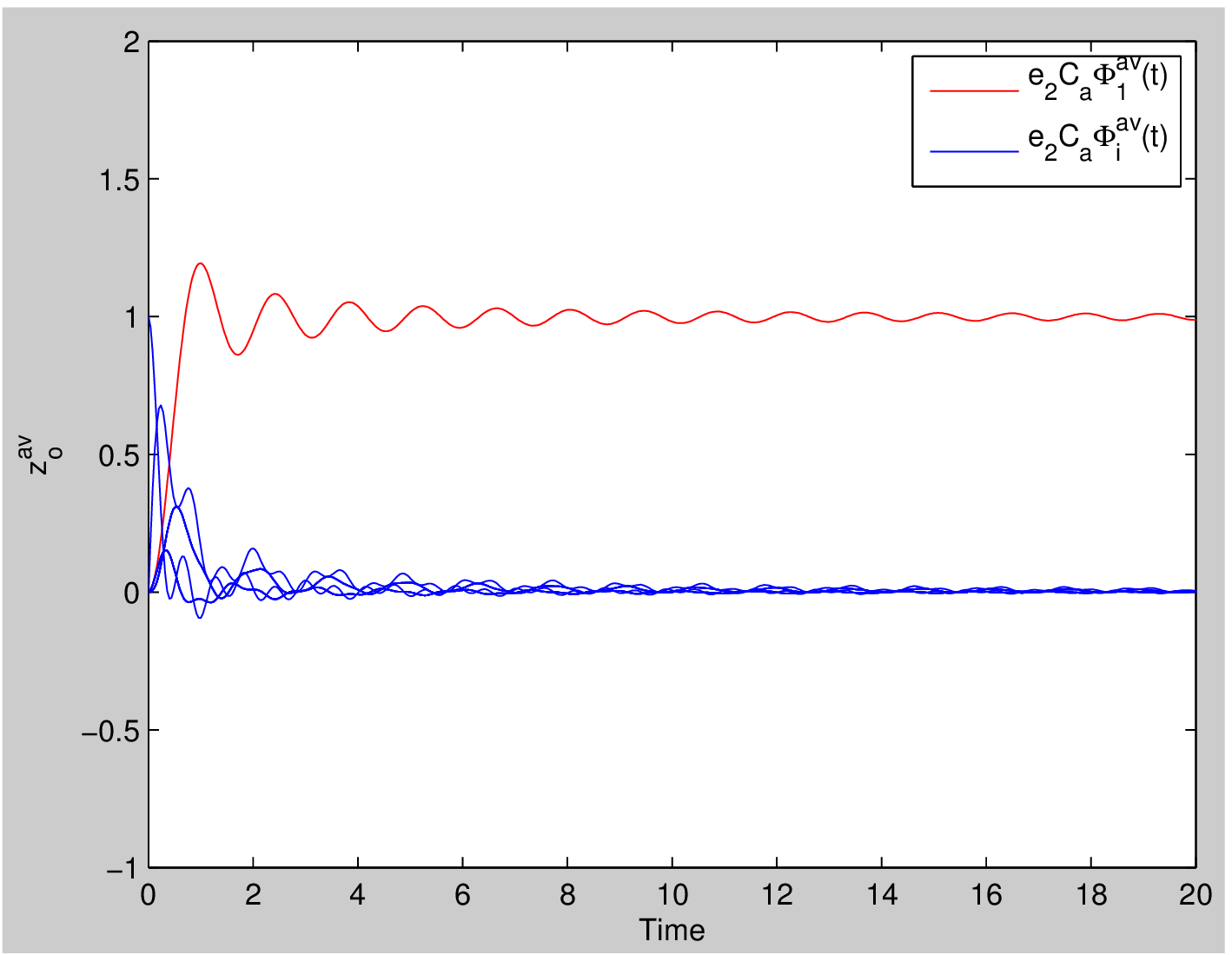}}%
\qquad
\subfloat[][]{\includegraphics[width=3.9cm]{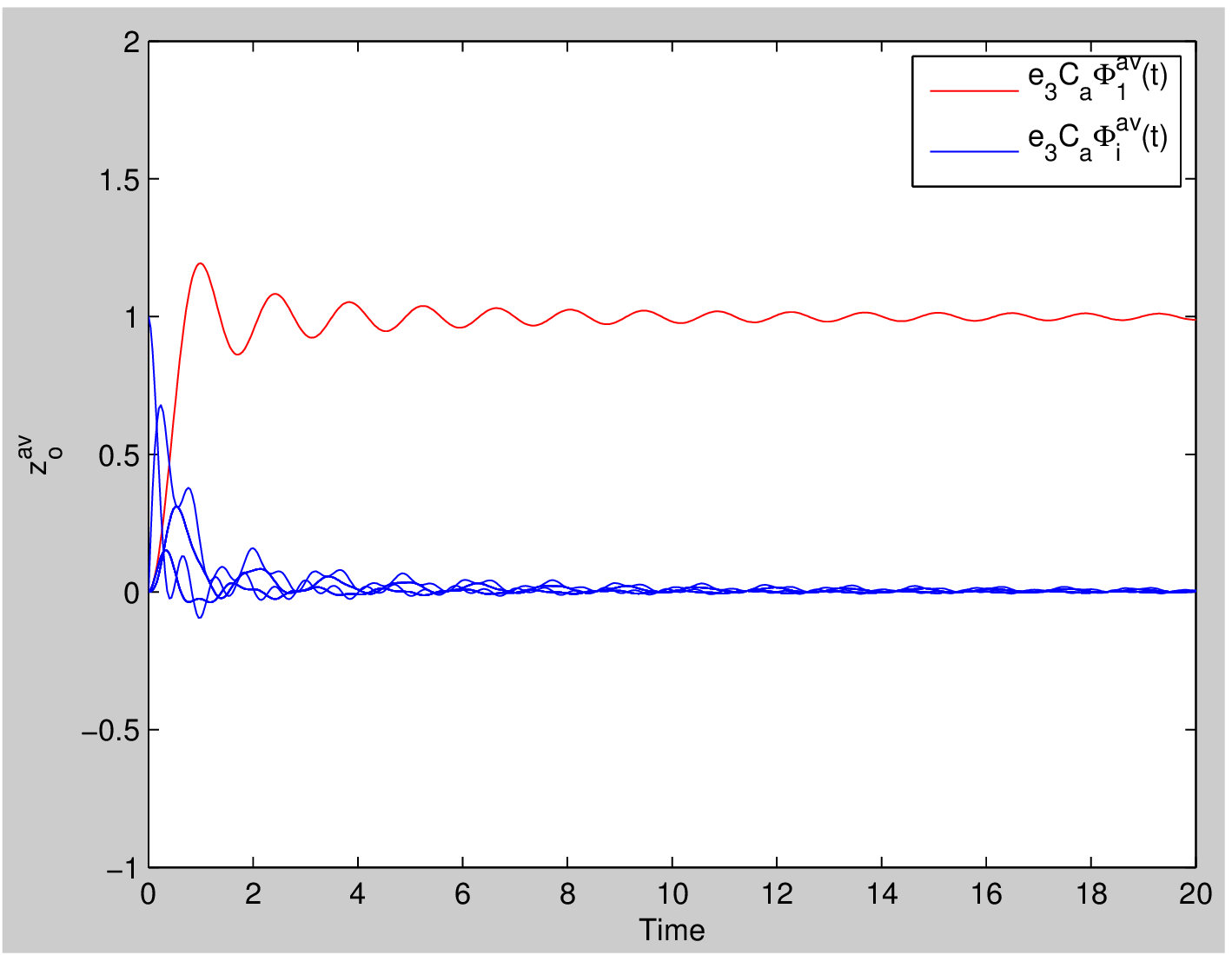}}\\
\subfloat[][]{\includegraphics[width=3.9cm]{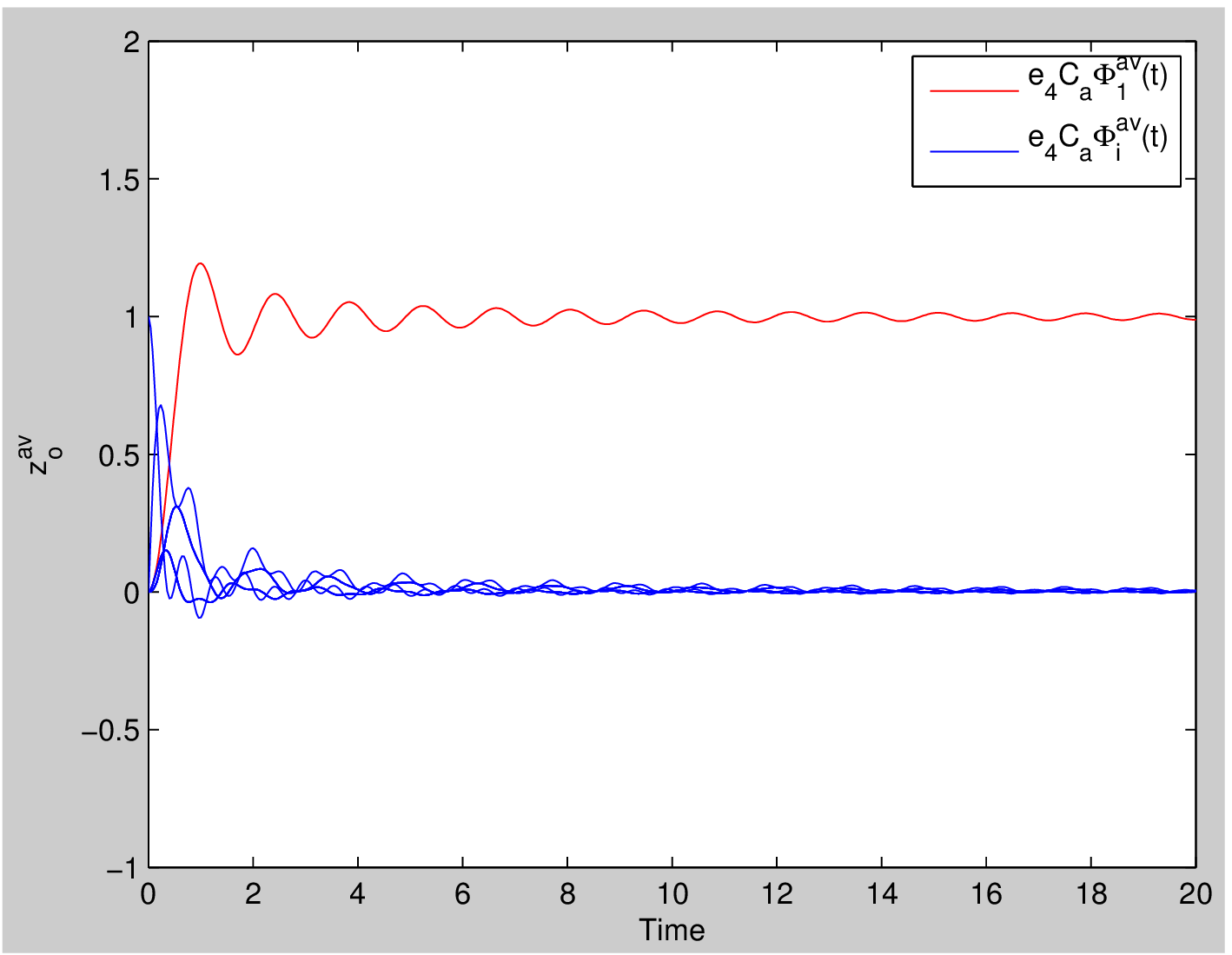}}%
\qquad
\subfloat[][]{\includegraphics[width=3.9cm]{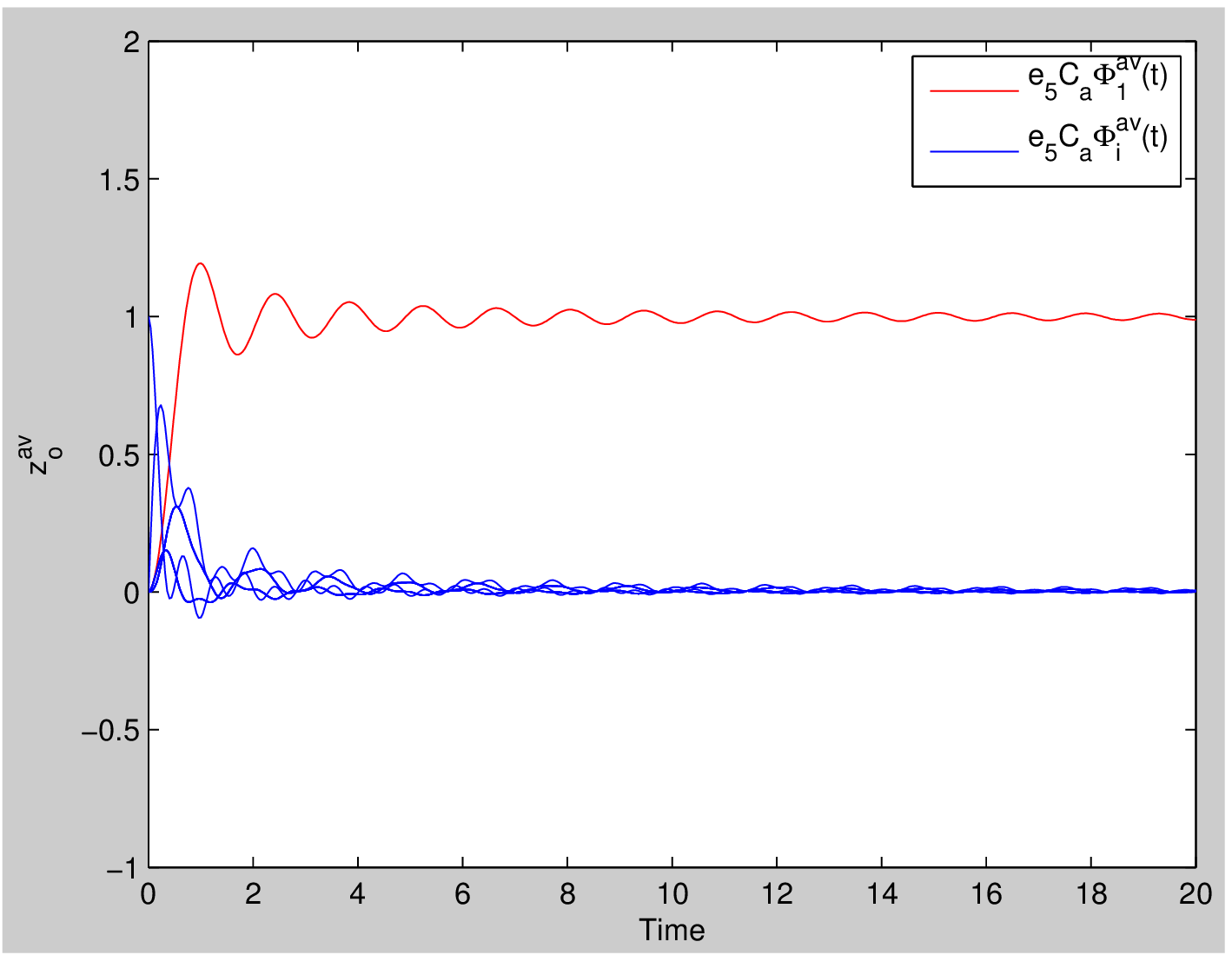}}\\
\subfloat[][]{\includegraphics[width=7cm]{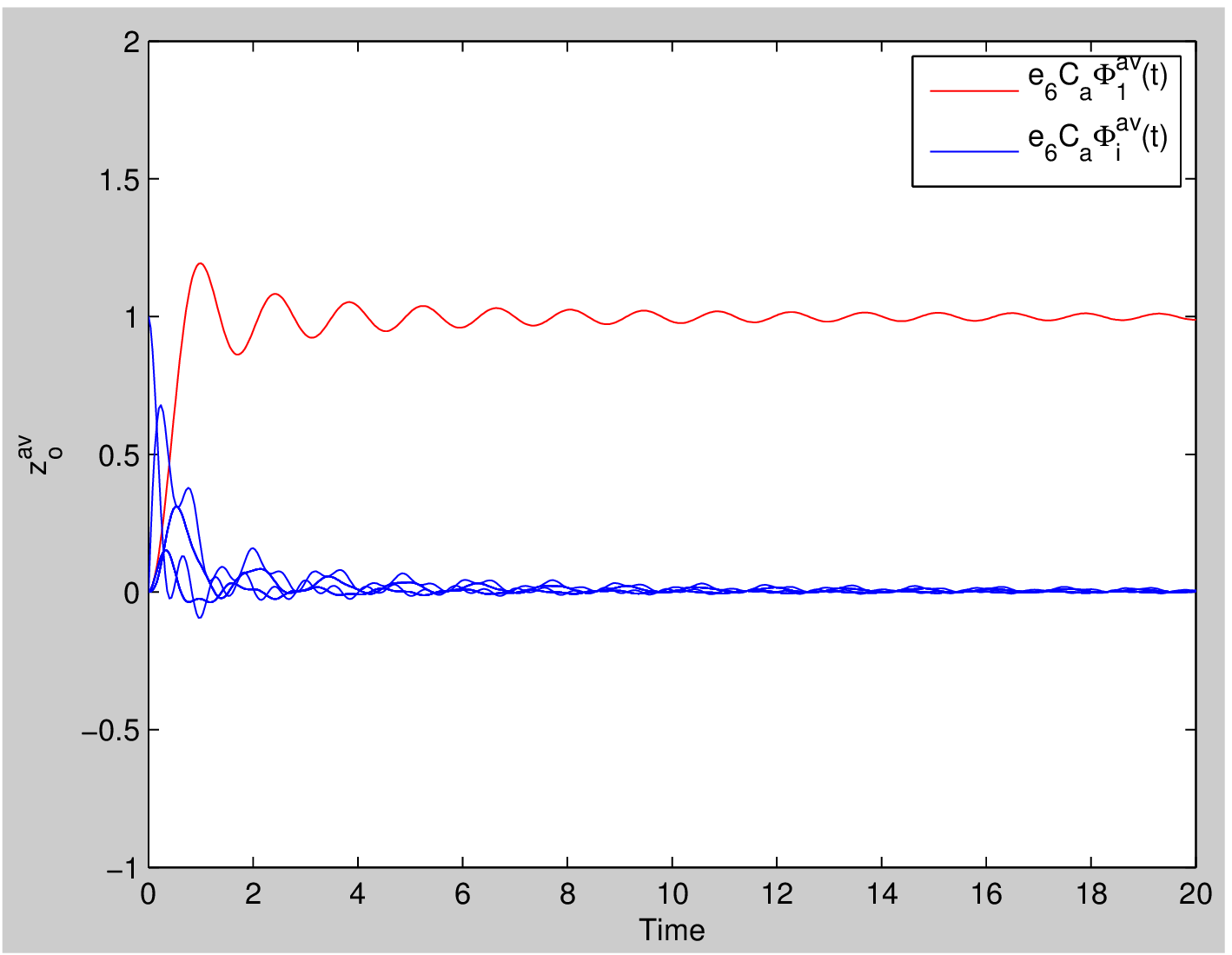}}
\caption{Coefficients defining the time average of (a)  $z_{o1}(t)$, (b) $z_{o2}(t)$, (c) $z_{o3}(t)$, (d) $z_{o4}(t)$, and (e) $z_{o5}(t)$.}%
\label{F6}%
 \end{figure}
%generated using F6.m
From these figures, we can see that for each $i=1,2,\ldots,N$, the time average of $z_{oi}(t)$  converges to $z_p(0)$ as $t \rightarrow \infty$. That is, the  quantum observer network reaches a time averaged consensus corresponding to the output of the quantum plant which is to be estimated. 

% \bibliography{/home/irp/Bibliog/irpnew}
% \bibliographystyle{IEEEtran}

\end{document}